\documentclass[notitlepage]{article}
\usepackage{amssymb}
\usepackage{amsfonts}
\usepackage{amsmath}
\usepackage{eurosym}
\usepackage{setspace}
\usepackage{graphicx}

\newtheorem{lemma}{Lemma}

\newtheorem{proposition}{Proposition}

\begin{document}

\title{Testing a Graph-Theoretic Condition for Aggregating Incomplete
Rankings: A Technical Note\thanks{%
I am grateful to Tsuyoshi Adachi for useful comments that motivated the
writing of this note. This work was supported by JSPS KAKENHI Grant Number
JP25K05004.}}
\author{Yasunori Okumura\thanks{%
Address: 2-1-6, Etchujima, Koto-ku, Tokyo, 135-8533 Japan.
Phone:+81-3-5245-7300. Fax:+81-3-5245-7300. E-mail:
okuyasu@gs.econ.keio.ac.jp}}
\maketitle

Okumura (2025) studies the problem of aggregating individual rankings when
the submitted rankings may be incomplete. He identifies a key condition,
called Condition 1, for analyzing this problem, but does not address how the
condition can be checked in a computationally efficient way. This note fills
this gap by showing how a standard graph-theoretic algorithm can be used to
test Condition 1. We also use this approach to describe an efficient
implementation of the mechanism $f^{\ast }$ proposed by Okumura (2025).

First, we briefly revisit the model of Okumura (2025).

Let $A$ be the finite set of alternatives. Let $V$ be the finite set of
individuals (voters) who submit a ranking of alternatives. We assume $%
\left\vert A\right\vert \geq 3$ and $\left\vert V\right\vert \geq 3$.

Let $A_{v}\subseteq A$ be a subset of alternatives representing the set of
alternatives that $v$ evaluates and $\mathbf{A}=\left( A_{v}\right) _{v\in
V}\subseteq A^{\left\vert V\right\vert }$. We assume that for any $v\in V$,
at least two alternatives are in $A_{v}$.

An undirected \textbf{graph} is identified with its edge set: it is a set of
unordered pairs of distinct alternatives. The alternatives that appear in
these pairs are called vertices. If $\left\{ a,b\right\} $ is an element of
a graph $G$, then we simply write $ab\in G$ or equivalently $ba\in G$. A
subset of $G$ is called a \textbf{subgraph} of $G$.

A subgraph of $G$ denoted by $P\subseteq G$ is said to be an (undirected) 
\textbf{path} of $G$ if 
\begin{equation*}
P=\left\{ a_{1}a_{2},a_{2}a_{3},\cdots ,a_{M-1}a_{M}\right\} ,
\end{equation*}%
where $a_{1},a_{2},\cdots ,a_{M}$ are distinct. A subgraph of $G$ denoted by 
$C\subseteq G$ is said to be an (undirected) \textbf{cycle} of $G$ if 
\begin{equation}
C=\left\{ a_{1}a_{2},a_{2}a_{3},\cdots ,a_{M-1}a_{M},a_{M}a_{1}\right\} ,
\label{c}
\end{equation}%
where $a_{1},a_{2},\cdots ,a_{M}$ are distinct and $M\geq 3$, which rules
out that $C=\left\{ a_{1}a_{2},a_{2}a_{1}\right\} $ is a cycle.

We construct graphs from a given $\mathbf{A}$. For $v\in V$, let 
\begin{equation*}
G_{v}=\left\{ ab\text{ }\left\vert \text{ }a,b\in A_{v}\text{ and }a\neq
b\right. \right\} ;
\end{equation*}%
meaning that if individual $v$ evaluates both $a$ and $b$, then $ab\in G_{v}$%
. Moreover, 
\begin{equation*}
G=\left\{ ab\text{ }\left\vert \text{ }a,b\in A_{v}\text{ for some }v\in V%
\text{ and }a\neq b\right. \right\} .
\end{equation*}

For a subgraph $G^{\prime }\subseteq G$, let $A\left( G^{\prime }\right) $
be the subset of $A$ such that for all $a\in A\left( G^{\prime }\right) ,$
there is $ab\in G^{\prime }$ for some $b\in A$. A nonempty subgraph $%
G^{\prime }\subseteq G$ is said to be \textbf{connected}\ if for all $a,b\in
A\left( G^{\prime }\right) ,$ there is a path from $a$ to $b$ in $G^{\prime }
$. A vertex $a\in A\left( G^{\prime }\right) $ is said to be an \textbf{%
articulation} \textbf{vertex} of $G^{\prime }$ if there exist distinct $%
b,c\in A\left( G^{\prime }\right) \setminus \{a\}$ such that every path in $%
G^{\prime }$ from $b$ to $c$ contains $a$.

A subgraph $B\subseteq G$ is said to be \textbf{biconnected} if $B$ either
consists of a single edge or is connected and has no articulation vertex.
Moreover, a nonempty subgraph $B\subseteq G$ is said to be a \textbf{%
biconnected component} of $G$ if $B$ is a maximal biconnected subgraph of $G$%
.\footnote{%
Throughout this paper, graphs are identified with their edge sets, and hence
isolated vertices are \textit{not} treated as connected components or
biconnected components.} A biconnected component $B$ is said to be a \textbf{%
clique} if for all $a,b\in A\left( B\right) $ with $a\neq b$, we have $ab\in
B$.

It is well known that, for any graph $G$, all biconnected components of $G$
can be enumerated in linear time by a depth-first search algorithm; see
Hopcroft and Tarjan (1973). Thus, we can let $B_{1},\cdots ,B_{N}$ be the
biconnected components of $G$. Since every single-edge subgraph of $G$ is
biconnected, 
\begin{equation*}
B_{1}\cup \cdots \cup B_{N}=G\text{.}
\end{equation*}

For any biconnected component $B$ of $G$ with $\left\vert A\left( B\right)
\right\vert \geq 3$, and for any distinct $a,b\in A\left( B\right) $, there
exists a cycle in $B$ that contains both $a$ and $b$. Moreover, every cycle
in $G$ is contained in some biconnected component, because every cycle is
biconnected and hence is contained in a maximal biconnected subgraph of $G$.
We have the following result.

\begin{proposition}
Condition 1 is satisfied if and only if for each $n=1,\cdots ,N$, there is
at least one individual who evaluates $A\left( B_{n}\right) ;$ that is,
there is $v_{n}$ such that $B_{n}\subseteq G_{v_{n}}$.
\end{proposition}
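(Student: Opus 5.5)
I will take Condition 1 of Okumura (2025) in the form: for every cycle $C$ of $G$ there is some individual $v$ with $C\subseteq G_{v}$, equivalently $A(C)\subseteq A_{v}$. If the original wording differs, it should be equivalent to this for the argument below. The plan is to prove the two directions separately. The "if" direction is short. The "only if" direction is the real content and needs an extremal choice of individual.

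\emph{If.} Suppose every $B_{n}$ satisfies $B_{n}\subseteq G_{v_{n}}$ for some $v_{n}$. Take any cycle $C$ of $G$. As noted just before the proposition, $C$ is biconnected and hence lies in some biconnected component $B_{n}$. Then $C\subseteq B_{n}\subseteq G_{v_{n}}$, so Condition 1 holds.

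\emph{Only if.} Assume Condition 1 and fix a component $B=B_{n}$.

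If $B$ is a single edge $ab$, then $ab\in G$ by definition means $a,b\in A_{v}$ for some $v$, so $B\subseteq G_{v}$.

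Now let $|A(B)|\geq 3$. Since $B$ has a cycle, some individual covers all its vertices. Among all individuals, choose $v$ so that $S=A_{v}\cap A(B)$ is maximal under inclusion; then $|S|\geq 3$. It suffices to show $S=A(B)$, because $G_{v}$ contains every pair of elements of $A_{v}$, which gives $B\subseteq G_{v}$.

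Suppose instead that some $a\in A(B)\setminus S$ exists. Since $B$ is biconnected with at least three vertices, the fan version of Menger's theorem gives two paths in $B$ from $a$ to $S$. These paths share only $a$, and I truncate each at its first vertex in $S$, calling the endpoints $s\neq t$. Their interior vertices therefore avoid $S$.

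Next I close these paths into a cycle. Every pair of vertices of $S$ is an edge of $G_{v}\subseteq G$, so I can go from $s$ through all remaining vertices of $S$ in any order and end at $t$ using these edges. Concatenating this with the two paths gives a cycle $C$ of $G$. It is a genuine cycle: it has at least $|S|+1\geq 4$ vertices, and the vertices are distinct because the path interiors avoid $S$ and each other. By Condition 1 some $w$ has $A(C)\subseteq A_{w}$. Hence $A_{w}\cap A(B)\supseteq S\cup\{a\}\supsetneq S$, which contradicts the maximality of $S$. So $S=A(B)$.

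The main obstacle is exactly this step. Condition 1 only guarantees that each individual cycle is covered by \emph{some} voter, and different cycles may be covered by different voters. The maximal-$S$ choice, together with the cycle through all of $S\cup\{a\}$, is what forces a single voter to cover the whole component. The points that need care are invoking the fan lemma, which requires $B$ to be 2-connected with $|S|\geq 2$, and using edges of $G$ outside $B$ in the cycle. The latter is harmless, because the resulting cycle lies in $G$, and in fact it must lie in $B$ by maximality of biconnected components.
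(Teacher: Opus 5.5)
Your proof is correct. Your ``if'' direction and your single-edge case match the paper's, but you handle a component $B$ with $|A(B)|\geq 3$ by a different route.

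The paper first proves (Lemma 1) that every $B_n$ is a clique. Any two distinct $a,b\in A(B_n)$ lie on a common cycle, so Condition 1 gives a voter evaluating both. Hence $ab\in G$, and maximality of $B_n$ puts $ab$ in $B_n$. A clique on at least three vertices has a cycle through all of its vertices. One application of Condition 1 to that single cycle then yields a voter covering all of $A(B_n)$.

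So the obstacle you single out (different cycles possibly covered by different voters) is avoided by exhibiting one cycle through every vertex of the component. No extremal choice and no fan lemma are needed, only the fact, already stated in the note, that two vertices of a biconnected component lie on a common cycle. The clique structure obtained along the way is also reused later in the note, to match the maximal cycles $C^x$ with the components indexed by $\mathcal{N}^{\geq 3}$. Your augmentation argument (inclusion-maximal coverage $S$, a fan from $a$ to $S$, closed up through the pairs of $A_v$) reaches the same conclusion by growing the covered set one vertex at a time. It is valid but heavier, since it relies on the fan form of Menger's theorem.

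One small slip. An arbitrary inclusion-maximal $S$ need not contain the vertex set of the covered cycle you start from, so ``then $|S|\geq 3$'' does not follow as written. There are two easy fixes. You can take $S$ of maximum cardinality. Alternatively, note that every inclusion-maximal $S$ has $|S|\geq 2$, because each vertex of $A(B)$ lies on an edge of $B$ and some voter evaluates that edge. Then $|S|\geq 2$ is all your fan-and-cycle construction actually requires: the resulting cycle has at least three vertices.
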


\textbf{Proof.} We show the if-part. Suppose for each $n=1\cdots ,N$, there
is $v_{n}$ such that $B_{n}\subseteq G_{v_{n}}$. Each cycle $C$ in $G$ is
contained in some biconnected component $B_{n}$, $C\subseteq G_{v_{n}}$ for
some $n=1\cdots ,N$. Thus, Condition 1 is satisfied.

We show the only-if-part. We have the following result.

\begin{lemma}
If Condition 1 is satisfied, then $B_{n}$ is a clique for all $n=1,\cdots ,N$%
.
\end{lemma}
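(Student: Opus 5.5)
The plan is to show directly that any two distinct vertices of a biconnected component $B_{n}$ are adjacent in $B_{n}$. I read Condition 1 in the form used in the if-part above: for every cycle $C$ in $G$ there is some individual $v$ with $C\subseteq G_{v}$, i.e. $A(C)\subseteq A_{v}$.

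First, the trivial case. If $B_{n}$ consists of a single edge $ab$, then $A(B_{n})=\{a,b\}$, so $B_{n}$ is a clique immediately. Hence assume $|A(B_{n})|\geq 3$ and fix distinct $a,b\in A(B_{n})$. By the fact recalled before the Proposition, there is a cycle $C\subseteq B_{n}$ containing both $a$ and $b$. Condition 1 then gives an individual $v$ with $C\subseteq G_{v}$. In particular $a,b\in A_{v}$, so $ab\in G_{v}\subseteq G$.

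It remains to show $ab\in B_{n}$, and I expect this to be the main (though mild) obstacle. The approach is to use the maximality of $B_{n}$: I will show that $B^{\prime }=B_{n}\cup \{ab\}$ is biconnected. Since $a,b\in A(B_{n})$, we have $A(B^{\prime })=A(B_{n})$.
\begin{itemize}
\item \emph{Connectivity.} Every path in $B_{n}$ is still a path in $B^{\prime }$, so $B^{\prime }$ is connected because $B_{n}$ is.
\item \emph{No articulation vertex.} Suppose $x\in A(B^{\prime })$ were an articulation vertex of $B^{\prime }$, witnessed by distinct $y,z\in A(B^{\prime })\setminus \{x\}$ such that every path in $B^{\prime }$ from $y$ to $z$ contains $x$. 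All these vertices lie in $A(B_{n})$, and every path in $B_{n}$ is a path in $B^{\prime }$. So every path in $B_{n}$ from $y$ to $z$ also contains $x$, making $x$ an articulation vertex of $B_{n}$. This contradicts the biconnectivity of $B_{n}$ (which, having at least three vertices, is not a single edge and so has no articulation vertex).
\end{itemize}
Thus $B^{\prime }$ is a biconnected subgraph of $G$ containing $B_{n}$. Maximality of $B_{n}$ forces $B^{\prime }=B_{n}$, that is, $ab\in B_{n}$.

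Since $a,b$ were arbitrary distinct vertices of $A(B_{n})$, $B_{n}$ is a clique, and this holds for every $n=1,\cdots ,N$.
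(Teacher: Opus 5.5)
Your proof is correct and follows the same route as the paper: dispose of the single-edge case, take a cycle in $B_{n}$ through $a$ and $b$, and use Condition 1 to obtain an individual $v$ with $a,b\in A_{v}$. The only difference is that the paper simply asserts that this gives $ab\in B_{n}$, whereas you justify it by showing that $B_{n}\cup\{ab\}$ is biconnected and invoking the maximality of $B_{n}$, which makes explicit a step the paper leaves implicit.
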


\textbf{Proof.} Suppose that Condition 1 is satisfied. We arbitrarily fix $%
B_{n}$ and $a,b\in A\left( B_{n}\right) $ with $a\neq b$. If $\left\vert
A\left( B_{n}\right) \right\vert =2$, then $B_{n}=\{ab\}$ is trivially a
clique. Suppose $\left\vert A\left( B_{n}\right) \right\vert \geq 3$. Then,
there is a cycle in $B_{n}$ that contains both $a$ and $b$. Since Condition
1 is satisfied, there is $v$ such that $a,b\in A_{v}$. This implies that $%
ab\in B_{n}$ and therefore $B_{n}$ is a clique. \textbf{Q.E.D.}\newline

Now, we show Proposition 1. First, suppose $\left\vert A\left( B_{n}\right)
\right\vert =2$. In this case, we can let $B_{n}=\{ab\}$. Then, there must
exist an individual $v_{n}$ who evaluates both $a$ and $b$ and thus $%
B_{n}\subseteq G_{v_{n}}$. Second, suppose $\left\vert A\left( B_{n}\right)
\right\vert \geq 3$. By Lemma 1, $B_{n}$ is a clique. Hence there exists a
cycle in $B_{n}$ whose vertex set is exactly $A\left( B_{n}\right) $. By
Condition 1, there is $v_{n}\in V$ such that all alternatives in this cycle
are contained in $A_{v_{n}}$. Thus $B_{n}\subseteq G_{v_{n}}$. \textbf{Q.E.D.%
}\newline

By Proposition 1, Condition 1 can be tested as follows. First, decompose $G$
into its biconnected components $B_{1},\cdots ,B_{N}$. Then, for each
biconnected component $B_{n}$, check whether there exists an individual who
evaluates all alternatives in $A\left( B_{n}\right) $. If such an individual
exists for every biconnected component, then Condition 1 is satisfied;
otherwise, Condition 1 is not satisfied.

Suppose that Condition 1 is satisfied. We can let $v_{1},\cdots ,v_{N}$ be
individuals who evaluate all alternatives $A\left( B_{1}\right) ,\cdots
,A\left( B_{N}\right) $, respectively. Then, $v_{1},\cdots ,v_{N}$ can be
the local dictators in the mechanism $f^{\ast }$ introduced by Okumura
(2025).

To identify $f^{\ast }$, Okumura (2025) introduces the maximal cycles $%
C^{1},\cdots ,C^{X}$ satisfying

\begin{description}
\item[(i)] neither $A\left( C^{x}\right) \setminus A\left( C^{y}\right) $
nor $A\left( C^{y}\right) \setminus A\left( C^{x}\right) $ is empty for all $%
x,y$ such that $x\neq y,$ and

\item[(ii)] if $a$ is not included in any of $C^{1},\cdots ,C^{X}$ ; that
is, if 
\begin{equation*}
a\in A\setminus \left( A\left( C^{1}\right) \cup \cdots \cup A\left(
C^{X}\right) \right) \left( \equiv A^{0}\right) \text{,}
\end{equation*}%
then $a$ is not included in any cycle of $G$.
\end{description}

We distinguish $B_{1},\cdots ,B_{N}$ into two groups. Let 
\begin{eqnarray*}
\mathcal{N}^{2} &\mathcal{=}&\left\{ \left. n=1,\cdots ,N\text{ }\right\vert 
\text{ }\left\vert A\left( B_{n}\right) \right\vert =2\right\} , \\
\mathcal{N}^{\geq 3} &\mathcal{=}&\left\{ \left. n=1,\cdots ,N\text{ }%
\right\vert \text{ }\left\vert A\left( B_{n}\right) \right\vert \geq
3\right\} .
\end{eqnarray*}

Since each $B_{n}$ is a clique, every $B_{n}$ with $n\in \mathcal{N}^{\geq
3} $ contains a cycle whose vertex set is $A\left( B_{n}\right) $. Moreover,
every cycle of $G$ is contained in some biconnected component. Hence the
maximal cycles $C^{1},\cdots ,C^{X}$ correspond exactly to the biconnected
components $B_{n}$ with $n\in \mathcal{N}^{\geq 3}$. Therefore, $\left\vert 
\mathcal{N}^{\geq 3}\right\vert =X$ and, for each $x=1\cdots ,X$, there is a
unique $n_{x}\in \mathcal{N}^{\geq 3}$ such that $A\left( C^{x}\right)
=A\left( B_{n_{x}}\right) $. Since Condition 1 is satisfied, for each $%
n_{x}=1,\cdots ,N,$ $v_{n_{x}}$ satisfies $B_{n_{x}}\subseteq G_{v_{n_{x}}}$%
. This implies each $v_{n_{x}}$ can be a local dictator who evaluates $%
A\left( C^{x}\right) =A\left( B_{n_{x}}\right) $.

On the other hand, for $n\in \mathcal{N}^{2}$, we can let $B_{n}=\left\{
ab\right\} $. Then, $v_{n}$ ($v^{ab}$ in the terminology of Okumura, 2025)
evaluates both $a$ and $b$ and thus it can be a local dictator who evaluates
them.

Therefore, $v_{1},\cdots ,v_{N}$ can be chosen as the local dictators in $%
f^{\ast }$.

\subsubsection*{\textbf{References}}

\begin{description}
\item Hopcroft, J., and Tarjan, R. 1973. Algorithm 447: Efficient Algorithms
for Graph Manipulation. Communications of the ACM, 16(6), 372--378. DOI:
10.1145/362248.362272.

\item Okumura, Y. 2025. Aggregating incomplete rankings. Mathematical Social
Sciences 136, 102423.
\end{description}

\end{document}